\DeclareMathOperator{\scat}{{\rm sc}}
\DeclareMathOperator{\merg}{{\rm merge}}
\newcommand{\cP}{{\mathcal P}}
\newcommand{\cQ}{{\mathcal Q}}
\renewcommand{\P}{{\mathsf P}}
\newcommand{\NP}{{\mathsf{NP}}}
\newcommand{\coNP}{{\mathsf{coNP}}}
\newcommand\pcase[1] {\medskip\par\noindent \textbf{#1}}
\begin{document}
\title{Linear-Time Algorithms for Scattering Number and Hamilton-Connectivity of Interval Graphs
\thanks{Paper supported by Royal Society Joint Project Grant JP090172.}}

\author
{Hajo Broersma\inst{1} 
\and 
Ji\v{r}\'{\i} Fiala\inst{2}\thanks{Author supported by the GraDR-EuroGIGA project GIG/11/E023 and by the project Kontakt LH12095.} 
\and
Petr A. Golovach\inst{3}  
\and 
Tom\'a\v{s} Kaiser\inst{4}\thanks{Author supported by project P202/11/0196 of the Czech Science Foundation.}  
\and 
Dani\"el Paulusma\inst{5}\thanks{Author supported by EPSRC (EP/G043434/1).}  
\and Andrzej Proskurowski\inst{6}
}

\institute{
Faculty of EEMCS, University of Twente, Enschede, {\tt h.j.broersma@utwente.nl} 
\and
Department of Applied Mathematics, Charles University, Prague, {\tt fiala@kam.mff.cuni.cz}
\and
Institute of Computer Science, University of Bergen, {\tt petr.golovach@ii.uib.no}
\and
Department of Mathematics, University of West Bohemia, Plze\v{n}, {\tt kaisert@kma.zcu.cz}
\and
School of Engineering and Computing Sciences, Durham University, {\tt daniel.paulusma@durham.ac.uk}
\and
Department of Computer Science, University of Oregon, Eugene, {\tt andrzej@cs.uoregon.edu}
}

\maketitle

\begin{abstract}\noindent
Hung and Chang showed that for all $k\geq 1$ an interval graph has a path cover of size at most $k$ if and only if its scattering number is at 
most $k$. They also showed that an interval graph has a Hamilton cycle if and only if its scattering number is at most $0$.
We complete this characterization by proving that for all $k\leq -1$ an interval graph is $-(k+1)$-Hamilton-connected if and only if its scattering number is at most $k$. 
We also give an $O(m+n)$ time algorithm for computing the scattering number of an interval graph with $n$ vertices an $m$ edges, which improves the $O(n^4)$ time bound of Kratsch, Kloks and M\"uller.  
As a consequence of our two results the maximum $k$ for  which an interval graph is $k$-Hamilton-connected can be computed in $O(m+n)$ time. 
\end{abstract}

\section{Introduction}\label{s:intro}

The {\sc Hamilton Cycle} problem is that of 
testing whether a given graph has a Hamilton cycle, i.e., a cycle passing through all the vertices.
This problem
 is one of the most notorious $\NP$-complete problems within Theoretical Computer Science.
It remains $\NP$-complete on many graph classes such as the classes of planar cubic 3-connected graphs~\cite{GJT76},
chordal bipartite graphs~\cite{Mu96},  and strongly chordal split graphs~\cite{Mu96}. In contrast, for interval graphs, Keil~\cite{Ke85} showed in 1985 that {\sc Hamilton Cycle} can be solved in $O(m+n)$ time, thereby
strengthening an earlier result of Bertossi~\cite{Be83} for proper interval graphs.
Bertossi and Bonucelli~\cite{BB86} proved that {\sc Hamilton Cycle} is $\NP$-complete for undirected path graphs, double interval graphs and rectangle graphs, all three of which are classes of intersection graphs that contain the class of interval graphs.
We examine whether the linear-time result of Keil~\cite{Ke85}  can be strengthened on interval graphs to hold for other connectivity properties, which are $\NP$-complete to verify in general.
This line of research is well embedded in the literature. 
Before surveying existing work and presenting our new results, we first give the necessary terminology.

\subsection{Terminology}\label{s-term}

We only consider undirected finite graphs with no self-loops and no multiple edges. We refer to the textbook of Bondy and Murty~\cite{BM08} for any undefined graph terminology.
Throughout the paper we let $n$ and $m$ denote the number of vertices and edges, respectively, of the input graph.

Let $G=(V,E)$ be a graph. If $G$ has a {\em Hamilton cycle\/}, i.e., a cycle containing all the vertices of $G$, then $G$ is  {\em hamiltonian}.
Recall that the corresponding $\NP$-complete decision problem is called {\sc Hamilton Cycle}.
If $G$ contains a {\em Hamilton path}, i.e., a path containing all the vertices of $G$, then $G$ is  {\em traceable}. In this case, the corresponding decision problem is called the {\sc Hamilton Path} problem,
which is also well known to be $\NP$-complete (cf.~\cite{GJ79}).
The problems {\sc 1-Hamilton Path} and {\sc 2-Hamilton Path} are those of testing whether a given graph has a Hamilton path that starts in some given vertex $u$ or that is between two given vertices $u$ and $v$, respectively. Both problems are $\NP$-complete by a straightforward reduction from {\sc Hamilton Path}. The {\sc Longest Path} problem is to compute the maximum length
of a path in a given graph. This problem is $\NP$-hard by a reduction from {\sc Hamilton Path} as well.

Let $G=(V,E)$ be a graph.
If for each two distinct vertices $s,t\in V$ there exists a Hamilton path with end-vertices $s$ and $t$, then
$G$ is {\em Hamilton-connected\/}. If $G-S$ is Hamilton-connected for every set $S\subset V$ with $|S|\le k$ for some integer $k\geq 0$, then
$G$ is {\em $k$-Hamilton-connected\/}. Note that a graph is Hamilton-connected if and only if it 0-Hamilton-connected. The  {\sc Hamilton Connectivity} problem is that of computing the
maximum value of $k$ for which a given graph is $k$-Hamilton-connected.  Dean~\cite{De93} showed that already deciding whether $k=0$ is $\NP$-complete.
Ku\v{z}el,  Ryj{\'a}\v{c}ek and Vr{\'a}na~\cite{KRV12} proved this for $k=1$. 
A straightforward generalization of the latter result yields the same for any integer $k\geq 1$.
As an aside, the {\sc Hamilton Connectivity} problem has recently been studied by Ku\v{z}el, Ryj\'a\v{c}ek and Vr\'ana~\cite{KRV12}, who showed that $\NP$-completeness of the case $k=1$
for line graphs would disprove the conjecture of Thomassen that every 4-connected line graph is hamiltonian, unless $\P=\NP$. 

A {\it path cover} of a graph $G$ is a set of mutually vertex-disjoint paths
 $P_1,\ldots,P_k$ with  $V(P_1)\cup \cdots \cup V(P_k)=V(G)$. The size of a smallest path cover is denoted by $\pi(G)$.
The {\sc Path Cover} problem is to compute this number, whereas the 
{\sc 1-Path Cover} problem is to compute the size of a smallest path cover that contains a path in which some given vertex $u$ is an end-vertex.
Because a Hamilton path of a graph is a path cover of size 1, 
{\sc Path Cover} and {\sc 1-Path Cover} are $\NP$-hard via a reduction from {\sc Hamilton Path} and {\sc 1-Hamilton Path}, respectively.

We denote the number of connected components of a graph $G=(V,E)$ by $c(G)$. 
A subset $S\subset V$ is a
{\em vertex cut\/} of $G$ if $c(G-S)\geq 2$, and $G$ is called {\it $k$-connected} if the size of a smallest vertex cut of $G$ is at least $k$.
We say that $G$ is \mbox{{\em{$t$}-tough}} if $\vert S \vert \ge t \cdot 
c(G-S)$ for every vertex cut $S$ of  $G$. The {\em toughness\/} $\tau(G)$ of a graph $G=(V,E)$ was defined by Chv\'atal~\cite{Ch73} as
\[\tau(G)= \min\big{\{}\textstyle\frac{|S|}{c(G-S)}\; :\; S\subset V\; \mbox{and}\; c(G-S)\geq 2 \big{\}},\]
where we set $\tau (G)= \infty$ if $G$ is a complete graph.
Note that $\tau(G)\ge 1$ if $G$ is hamiltonian; the reverse statement does not hold in general (see~\cite{BM08}).
The {\sc Toughness} problem is to compute $\tau(G)$ for a graph $G$.
Bauer, Hakimi and Schmeichel~\cite{BHS90} showed that already deciding whether $\tau(G)=1$ is $\coNP$-complete.

The \emph{scattering number} of a graph $G=(V,E)$ was defined by Jung~\cite{Ju78} as
$$\scat(G)=\max\{c(G-S)-|S| \; :\; S\subset V\; \mbox{and}\;  c(G-S)\geq 2\},$$ 
where we set $\scat(G)=-\infty$ if $G$ is a complete graph.
We call a set $S$ on which $\scat(G)$ is attained a \emph{scattering set}.
Note that $\scat(G)\le 0$ if $G$ is hamiltonian.
Shih, Chern and Hsu~\cite{SCH92} show  that
$\scat(G)\leq \pi(G)$ for all graphs $G$. Hence, $\scat(G)\le 1$ if $G$ is traceable.
The {\sc Scattering Number} problem is to compute $\scat(G)$ for a graph $G$.
The observation that $\scat(G)=0$ if and only if $\tau(G)=1$ combined with the aforementioned result of Bauer, Hakimi and Schmeichel~\cite{BHS90}  implies that already deciding whether 
$\scat(G)=0$ is $\coNP$-complete.

A graph $G$ is an {\it interval graph} if it is the intersection graph of a set of closed intervals on the real line, i.e., the vertices of $G$ correspond to the intervals and two vertices are adjacent in $G$ if and only if their intervals have at least one point in common. An interval graph is {\it proper} if it has a closed interval representation in which no interval is properly contained in some other interval. 

\subsection{Known Results}\label{s-known}

We first discuss the results on testing hamiltonicity properties for proper interval graphs. Besides giving a linear-time algorithm for solving {\sc Hamilton Cycle} on proper interval graphs, Bertossi~\cite{Be83} also showed that a proper interval graph is traceable if and only if it is connected~\cite{Be83}. His work was extended by Chen, Chang and Chang~\cite{CCC97} who showed that
a proper interval graph is hamiltonian if and only if it is 2-connected, and that a proper interval graph is Hamilton-connected if and only if it is 3-connected. 
In addition, Chen and Chang~\cite{CC96} showed that a proper interval graph has scattering number at most $2-k$ if and only if it is $k$-connected.

Below we survey the results on testing hamiltonicity properties for interval graphs that appeared after the aforementioned result of Keil~\cite{Ke85} on solving {\sc Hamilton Cycle} for interval graphs  in $O(m+n)$  time.

\medskip
\noindent
{\it Testing for Hamilton cycles and Hamilton paths.} 
The $O(m+n)$ time algorithm of Keil~\cite{Ke85}
makes use of an interval representation. One can find such a representation by executing the $O(m+n)$ time interval recognition algorithm of 
Booth and Leuker~\cite{BL76}. If an interval representation is already given,
Manacher, Mankus and Smith~\cite{MMS90} showed that {\sc Hamilton Cycle} and {\sc Hamilton Path} can be solved in $O(n\log n)$ time. In the same
paper, they ask whether the time bound for these two problems can be improved to $O(n)$ time if a so-called sorted interval representation is given.
Chang, Peng and Liaw~\cite{CPL99} answered this question in the affirmative. They showed that this even holds for {\sc Path Cover}.

\medskip
\noindent
{\it When no Hamilton path exists.}
In this case, {\sc Longest Path} and {\sc Path Cover} are natural problems to consider.
Ioannidou, Mertzios and Nikolopoulos~\cite{IMN11} gave an $O(n^4)$ algorithm for solving {\sc Longest Path} on interval graphs.
Arikati and Pandu Rangan~\cite{AP90} and also Damaschke~\cite{Da93}
showed that {\sc Path Cover} can be solved in $O(m+n)$ time on interval graphs. 
Damaschke~\cite{Da93}  
posed the complexity status of {\sc 1-Hamilton Path} and {\sc 2-Hamilton Path} on interval graphs as open questions. 
The latter question is still open, but Asdre and Nikolopolous~\cite{AN10} answered the former 
question  by presenting an $O(n^3)$ time algorithm that solves {\sc 1-Path Cover}, and hence {\sc 1-Hamilton Path}.
Li and Wu~\cite{LW} announced an $O(m+n)$ time algorithm for {\sc 1-Path Cover} on interval graphs.
Although Hung and Chang~\cite{HC11} do not mention the scattering number explicitly, they show that for all $k\geq 1$
an interval graph has a path cover of size at most $k$ if and only if its scattering number is at 
most $k$. Moreover, they give an $O(n+m)$ time algorithm that finds a scattering set of an interval graph $G$ with $\scat(G)\geq 0$.
They also prove that an interval graph $G$ is hamiltonian if and only if $\scat(G)\leq 0$. Recall that the latter condition is equivalent to $\tau(G)\geq 1$. 
As such, their second result is claimed~\cite{CJKL98,KKS07} to be implicit already in Keil's algorithm~\cite{Ke85}.

  \subsection{Our Results}\label{s-ours}
 
 {\it When a Hamilton path does exist.} 
 In this case, {\sc Hamilton Connectivity} is a natural problem to consider.
 Isaak~\cite{Is98} used a closely related variant of toughness called $k$-path toughness to characterize interval graphs that contain the $k$th power of a Hamiltonian path. However, the aforementioned results of Hung and Chang~\cite{HC11} suggest that trying to characterize $k$-Hamilton-connectivity in terms
 of the scattering number of an interval graph may be more appropriate than doing this in terms of its toughness. We confirm this by showing that for all $k\geq 0$ an interval graph is $k$-Hamilton-connected if and only if its scattering number is at most $-(k+1)$. Together with the results of Hung and Chang~\cite{HC11} this leads to the following theorem.
  
\begin{theorem}\label{t-char}
Let $G$ be an interval graph. Then $\scat(G)\leq k$ if and only if\\[-15pt]
\begin{enumerate}[\quad(i)]
\item $G$ has a path cover of size at most $k$ when $k\geq 1$
\item $G$ has a Hamilton cycle when $k=0$
\item $G$ is $-(k+1)$-Hamilton-connected when $k\leq -1$.
\end{enumerate}
\end{theorem}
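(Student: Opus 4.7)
The theorem packages three equivalences indexed by $k$, and parts (i) and (ii) are exactly the results of Hung and Chang cited in Section~\ref{s-known}. The new content is (iii); writing $\ell = -(k+1)\ge 0$, the task is to show that an interval graph $G$ is $\ell$-Hamilton-connected if and only if $\scat(G)\le -\ell-1$.

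For the necessity direction the interval hypothesis is not needed. Fix $S\subset V$ with $c(G-S)\ge 2$; I want $|S|\ge c(G-S)+\ell+1$. If $|S|\le \ell$, then by definition $G-S$ is Hamilton-connected and hence connected, contradicting $c(G-S)\ge 2$. If $|S|=\ell+1$, pick $S_0\subset S$ of size $\ell$, write $S\setminus S_0=\{w\}$, and take a Hamilton path from $w$ to an arbitrary vertex of one component of $G-S$ inside the Hamilton-connected graph $G-S_0$; once the path leaves $w$ it cannot return, so it is trapped inside a single component of $G-S$, contradicting $c(G-S)\ge 2$. If $|S|\ge \ell+2$, pick $S_0\subset S$ of size $\ell$ and distinct $u,v\in S\setminus S_0$, and consider a Hamilton path from $u$ to $v$ in the Hamilton-connected graph $G-S_0$. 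Its maximal blocks alternate strictly between $S\setminus S_0$ and (maximal runs inside) individual components of $G-S$, start and end in $S\setminus S_0$, and together visit every component. Counting blocks yields at least $c(G-S)+1$ blocks inside $S\setminus S_0$, so $|S\setminus S_0|\ge c(G-S)+1$ and the required bound on $|S|$ follows.

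For sufficiency I would reduce to the case $\ell=0$ by monotonicity of the scattering number under deletion. For any $S'\subset V$ with $|S'|\le \ell$ and any $T\subseteq V\setminus S'$ with $c(G-(S'\cup T))\ge 2$,
\[
c((G-S')-T)-|T|=\bigl(c(G-(S'\cup T))-|S'\cup T|\bigr)+|S'|\le\scat(G)+\ell\le -1,
\]
so $\scat(G-S')\le -1$. Since $G-S'$ is itself an interval graph, sufficiency reduces to the base statement that every interval graph $H$ with $\scat(H)\le -1$ is Hamilton-connected.

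The base case is where the interval structure is essential, and I expect it to be the main obstacle. For distinct target vertices $u,v$, my plan is to build an auxiliary interval graph $H^{*}$ in which every Hamilton cycle corresponds to a Hamilton path of $H$ from $u$ to $v$, and then invoke part (ii) of the theorem on $H^{*}$. The natural candidate is to insert one or two new intervals forming a short gadget near $I_{u}$ and $I_{v}$ that forces any Hamilton cycle of $H^{*}$ to traverse $u\cdots v$ as a sub-path. Two points must then be resolved: realising the gadget inside the interval representation, which is delicate when $u$ and $v$ have many common neighbours; and checking that $\scat(H)\le -1$ forces $\scat(H^{*})\le 0$, so that the Hamilton-cycle characterisation applies. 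Should this direct construction resist a clean argument, the fallback is an induction on $|V(H)|$ in which a simplicial or otherwise extremal interval is deleted while preserving both $\scat\le -1$ and the prescribed endpoints. Either route leans heavily on the linear order of interval endpoints and the Helly property of intervals; mastering this interaction is what I expect to form the technical heart of the proof.
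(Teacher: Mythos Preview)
Your necessity argument and your reduction from general $\ell$ to $\ell=0$ are both correct. The paper handles necessity slightly differently (it first observes that $G-R$ is traceable whenever $|R|\le k+2$, then compares the scattering numbers of $G$ and $G-T$ for a suitable $T\subset S$), but your direct block-counting works just as well and is arguably cleaner.

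The genuine gap is the base case $\ell=0$. The gadget you sketch does not survive contact with arbitrary $u,v$. The canonical version would add a vertex $x$ adjacent only to $u$ and $v$; but if $u$ and $v$ are non-adjacent and some vertex $w$ has its interval strictly between $I_u$ and $I_v$, then $\{u,v,w,x\}$ induces a $C_4$ and $H^*$ is no longer interval, so part~(ii) cannot be invoked. More elaborate gadgets face the same obstruction: any interval meeting both $I_u$ and $I_v$ necessarily meets every interval between them, so you cannot isolate $u$ and $v$ as the only neighbours of a new vertex. Your fallback induction is too vague to evaluate, and deleting a simplicial vertex can easily push the scattering number above $-1$.

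The paper takes an entirely different route for the base case. It first establishes (Theorem~\ref{thm:syst}) that an interval graph has a spanning $p$-stave between its leftmost vertex $u_1$ and rightmost vertex $u_n$ if and only if $\scat(G)\le 2-p$; this is the structural heart of the paper and is proved via Algorithm~\ref{alg:maxkstave} and Lemma~\ref{lem:induction}. Thus $\scat(G)\le -1$ yields a spanning $3$-stave $(P,Q,R)$ between $u_1$ and $u_n$. For arbitrary target vertices $v,w$, the proof then locates them among the three paths and, through a short four-case analysis, splices the pieces together using the $\merg$ operation of Lemma~\ref{lem:eat} into a single Hamilton path from $v$ to $w$. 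The three internally disjoint $u_1$--$u_n$ paths provide enough global routing flexibility that no auxiliary graph is needed; this rerouting argument is where the real work lies, and your proposal does not reach it.
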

Moreover, we give an $O(m+n)$ time algorithm for solving {\sc Scattering Number} that also produces a scattering set. This improves the $O(n^4)$ time bound of a previous algorithm due to Kratsch, Kloks and M\"uller~\cite{KKM94}. Combining this result with Theorem~\ref{t-char} yields that {\sc Hamilton Connectivity} can be solved in $O(m+n)$ time on interval graphs.
For proper interval graphs we can express $k$-Hamilton-connectivity also in the following way.
 Recall that a proper interval graph has scattering number at most $2-k$ if and only if it is $k$-connected~\cite{CC96}.
Combining this result with Theorem~\ref{t-char} yields that for all $k\geq 0$, a proper interval graph is $k$-Hamilton-connected if and only if it is $(k+3)$-connected.

\subsection{Our Proof Method}\label{s-method}

In order to explain our approach we first need to introduce some additional terminology.
A set of internally vertex-disjoint paths $P_1,\ldots,P_p$, all of which have the same  
end-vertices $u$ and $v$ of a graph $G$, is called a {\it stave} or {\it $p$-stave} of $G$, which is {\it spanning} if $V(P_1)\cup \cdots \cup V(P_p)=V(G)$.
A spanning $p$-stave between two vertices $u$ and $v$ is also called  
a spanning $(p;u,v)$-path-system \cite{CCLLW}, a $p^*$-container between $u$ and $v$~\cite{Hs94,LHH07} or a spanning $p$-trail~\cite{LW}.
 By Menger's Theorem (Theorem 9.1 in \cite{BM08}), a graph $G$ is $p$-connected if and only if there exists a $p$-stave between any pair of vertices of $G$. It is also well-known that the existence of a $p$-stave between two given vertices can be decided in polynomial time (cf. \cite{BM08}). However, given an integer $p\ge 1$ and two vertices $u$ and $v$ of a general input graph $G$, deciding whether there exists a spanning $p$-stave between $u$ and $v$ is clearly an $\NP$-complete problem:  for $p=1$ there is a trivial polynomial reduction from the $\NP$-complete problem of deciding whether a graph is Hamilton-connected; for $p=2$ the problem is equivalent to the $\NP$-complete problem of deciding whether a graph is hamiltonian; for $p\ge 3$, the $\NP$-completeness follows easily by induction and by considering the graph obtained after adding one vertex and joining it by an edge to $u$ and $v$. 
We call a spanning stave between two vertices $u$ and $v$ of a graph {\em optimal\/} if it is a $p$-stave and there does not exist a spanning $(p+1)$-stave between $u$ and $v$.

Damaschke's algorithm~\cite{Da93}  for solving {\sc Path Cover} on interval graph, which is based on the approach of Keil~\cite{Ke85}, actually solves the following problem in $O(m+n)$ time: given an interval graph $G$ and an integer $p$, does $G$ have a spanning $p$-stave between the vertex $u_1$  corresponding to the leftmost interval of an interval model of $G$ and the vertex $u_n$ corresponding to the rightmost one?  We extend Damaschke's algorithm in Section~\ref{s-algo} to an $O(m+n)$ time algorithm that takes as input only an interval graph $G$ and finds an optimal stave of $G$ between $u_1$ and $u_n$, unless it detects that there does not exist a spanning stave between $u_1$ and $u_n$. 
In the latter case  $G$ is not hamiltonian. Hence,  $\scat(G)\geq 1$ as shown by Hung and Chang~\cite{HC11} meaning that their $O(m+n)$ time algorithm for computing a scattering set may be applied. 
Otherwise, i.e., if our algorithm found an optimal stave between $u_1$ and $u_n$, we show how this enables us to compute a scattering set of $G$ in $O(m+n)$ time. In the same section, we derive
that $G$ contains a spanning $p$-stave between $u_1$ and $u_n$ if and only if $\scat(G)\le 2-p$.

In Section~\ref{s-char} we prove our contribution to Theorem~\ref{t-char} (iii), i.e., the case when $k\leq -1$. In particular, for proving the subcase $k=-1$, we show that an interval graph $G$  is Hamilton-connected if it contains a spanning $3$-stave between 
the vertex  corresponding to the leftmost interval of an interval model of $G$ and the vertex corresponding to the rightmost one.

\section{Spanning Staves and the Scattering Number}\label{s-algo}

In order to present our algorithm we start by giving the necessary terminology and notations.

A set $D\subseteq V$ \emph{dominates} a graph $G=(V,E)$ if each vertex of $G$ belongs to $D$ or has a neighbor in $D$. 
We will usually denote a path in a graph by its sequence of distinct vertices such that consecutive vertices are adjacent.
If $P=u_1\dots u_n$ is a path, then we denote its \emph{reverse} by $P^{-1}=u_n\dots u_1$.  
We may concatenate two paths $P$ and $P'$ whenever they are vertex-disjoint except for the last vertex of $P$ coinciding with the first vertex of $P'$.
The resulting path is then denoted by $P\circ P'$.

A {\it clique path} of an interval graph $G$ with vertices $u_1,\ldots,u_n$ is a sequence $C_1,\ldots,C_s$ of all maximal cliques of $G$, such that each edge of $G$ is present in some clique $C_i$ and
each vertex of $G$ appears in consecutive cliques only. This yields a specific interval model for $G$ that we will use throughout the remainder of this paper: a vertex $u_i$ of $G$ is represented by the interval 
$I_{u_i}=[\ell_i,r_i]$, where $\ell_i=\min \{j : u_i\in C_j\}$ and $r_i=\max \{j : u_i\in C_j\}$, which are referred to as the {\it start point} and the {\it end point} of $u_i$, respectively.
By definition, $C_1$ and $C_s$ are maximal cliques. Hence both $C_1$ and $C_s$ contain at least one vertex that does not occur in any other clique.
We assume that $u_1$ is such a vertex in $C_1$ and that $u_n$ is such a vertex in $C_s$. Note that $I_{u_1}=[1,1]$ and $I_{u_n}=[s,s]$ are single points. 

Damaschke made the useful observation that any Hamilton path in an interval graph can be reordered into a monotone one, in the following sense.

\begin{lemma}[\cite{Da93}]\label{lem:eat}
If the interval graph $G$ contains a Hamilton path, then it contains a Hamilton
path from $u_1$ to $u_n$. 
\end{lemma}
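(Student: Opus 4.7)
The proof plan relies on the fact that $u_1$ is a simplicial vertex of $G$ whose closed neighborhood is exactly the clique $C_1$, and symmetrically that $u_n$ has closed neighborhood $C_s$. I would start from an arbitrary Hamilton path $P = v_1 \ldots v_n$ and transform it in two phases into a Hamilton path from $u_1$ to $u_n$, using local shortcut and rotation operations enabled by this clique structure.

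The first phase places $u_1$ at an endpoint. If $u_1 = v_i$ lies in the interior of $P$, then its path-neighbors $v_{i-1}$ and $v_{i+1}$ both belong to $C_1$ and are therefore adjacent, so $u_1$ can be shortcut out of $P$ via the edge $v_{i-1}v_{i+1}$, leaving a Hamilton path of $G-u_1$. It can then be reinserted as a new endpoint whenever one of $v_1$ or $v_n$ lies in $C_1$, since that vertex is itself adjacent to $u_1$. When neither endpoint of $P$ lies in $C_1$, I would apply a P\'osa-style rotation first: choose an edge from $v_1$ to some vertex $v_k$ such that $v_{k-1} \in C_1$, and rotate to obtain a Hamilton path whose left endpoint is the $C_1$-vertex $v_{k-1}$, reducing to the previous case. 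The second phase then places $u_n$ at the opposite endpoint using the same argument with $C_s$ and $u_n$; since these rotations act only on the suffix of the path, they do not disturb $u_1$'s position.

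The technical heart of the argument is the P\'osa-rotation step in the bad sub-case of the first phase, when neither current endpoint of $P$ lies in $C_1$. A full proof must argue, using the interval structure, that a suitable rotation always exists: the linear order on the maximal cliques $C_1, \ldots, C_s$ forces any Hamilton path to thread through $C_1$ in such a way that a chord from $v_1$ to a vertex just past some $C_1$-vertex on $P$ is always available. An alternative route that avoids the rotation analysis is induction on $n$ applied to $G - u_1$, but this has its own complication when $C_1 \setminus \{u_1\}$ ceases to be a maximal clique of $G - u_1$ (i.e.\ when $C_1 \setminus \{u_1\} \subseteq C_2$): then the leftmost vertex of $G - u_1$ need not belong to $C_1$, so an additional rearrangement step within $C_2$ is required before one can attach $u_1$ to the recursively-obtained Hamilton path. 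Either way, this bad sub-case is where I expect the main difficulty to lie.
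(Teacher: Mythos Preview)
The paper does not prove this lemma; it is quoted from Damaschke~\cite{Da93} without proof, and after Corollary~\ref{c-scat} the authors explicitly ``refer to Damaschke's proof of Lemma~\ref{lem:eat}'' rather than supply one. There is therefore no in-paper argument to compare your plan against.

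On the plan itself: the simplicial-shortcut step is correct, but you yourself flag that the P\'osa-rotation sub-case is the crux, and you do not actually carry it out. The existence of a chord $v_1v_k$ with $v_{k-1}\in C_1$ is asserted rather than proved, and it does not follow immediately from the clique-path structure (nothing so far forces any neighbour of $v_1$ to sit immediately after a $C_1$-vertex on the current path). The paper's own one-line description of Damaschke's argument --- ``any Hamilton path in an interval graph can be reordered into a monotone one'' --- indicates that the original proof proceeds differently, by a greedy normalization with respect to the interval endpoints that automatically produces a path from $u_1$ to $u_n$, sidestepping the rotation issue entirely. If you wish to complete your approach you still owe a genuine structural argument for the rotation step; otherwise consulting Damaschke's normalization is the cleaner route.
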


We use Lemma~\ref{lem:eat} to rearrange certain path systems in $G$ into a single path as follows. Let $P$ be a path between $u_1$ and $u_n$ and let $\cQ=(Q_1,\dots,Q_k)$ be a collection of paths, each of which contains $u_1$ or $u_n$ as an end-vertex. Furthermore, $P$ and all the paths of $\cQ$ are assumed to be vertex-disjoint except for possible intersections at $u_1$ or $u_n$. Consider the path $Q_1$. By symmetry, it may be assumed to contain $u_1$. We
apply Lemma~\ref{lem:eat} to $P\circ (Q_1-u_n)$ and obtain a path $P'$ between $u_1$ and $u_n$
containing all the vertices of $P\cup Q_1$. Proceeding in a similar way for the paths $Q_2,\dots,Q_k$, we obtain a path between $u_1$ and $u_n$
on the same vertex set as $P\cup\bigcup_{j=1}^k Q_j$. We denote the resulting path by
$\merg(P,Q_1,\dots,Q_k)$ or simply by $\merg(P,\cQ)$.

Let $G$ be an interval graph with all the notation as introduced above. In particular, the vertices of $G$ are $u_1,\ldots, u_n$, we consider a clique path $C_1,\dots,C_s$, and the start point and end point of each $u_i$ are $\ell_i=\min \{j : u_i\in C_j\}$
and $r_i=\max \{j : u_i\in C_j\}$, respectively, 
where $I_{u_1}=[1,1]$ and $I_{u_n}=[s,s]$.
We can obtain this representation of $G$ by first executing the $O(m+n)$ time recognition algorithm of interval graphs due to Booth and Lueker~\cite{BL76} as their algorithm also produces a clique path $C_1,\dots,C_s$ for input interval graphs.

Algorithm~\ref{alg:maxkstave} is our $O(m+n)$ time algorithm for finding an optimal stave between $u_1$ and $u_n$ if it exists. 
It gradually builds up a set $\cP$ of internally disjoint paths starting at $u_1$ and passing through vertices of $C_t\setminus C_{t+1}$ before moving to $C_t\cap C_{t+1}$ for $t=1,\ldots,s$. It is convenient to consider all these paths ordered from $u_1$ to their (temporary) end-vertices that we call {\em terminals\/}, and to use the terms
{\em predecessor}, {\em successor}, and {\em descendant\/} of a fixed vertex $v$ in one of the paths with the usual meaning of a vertex immediately before, immediately after, and somewhere after $v$ in one of these paths, respectively. 

\begin{algorithm}[h]
\KwIn{A clique-path $C_1,\dots,C_s$ in an interval graph $G$.}
\KwOut{An optimal spanning stave $\cP$ between $u_1$ and $u_n$, if it exists.}
\Begin{
let $p=\deg(u_1)$\;
let $R_i=u_1$ for all $i=1,\dots,p$\;
let $\cP=\{R_1,\dots,R_p\}$\;
let $\cQ=\emptyset$\;
\For{$t:=1$ to $s-1$}{
choose a $P\in \cP$ whose terminal has the smallest end point among all terminals\;
\lIf {$C_t\setminus(C_{t+1}\cup\bigcup (\cP\cup \cQ))\ne \emptyset$}
{extend $P$ by $C_t\setminus(C_{t+1}\cup\bigcup (\cP\cup \cQ))$}\;
\For {every path $R\in \cP$}
{\If {the terminal of $R$ is not in $C_{t+1}$}{
{try to extend each $R$ by a new vertex $u$ from $(C_t\cap C_{t+1})\setminus \bigcup (\cP\cup \cQ)$ with the smallest end point\; 
\If {such $u$ does not exist} 
{remove $R$ from $\cP$\;
insert $R$ into $\cQ$\;
decrement $p$ \;
\lIf {$p=0$} {report that $G$ has no spanning 1-stave between $u_1$ and $u_n$ and \bf quit} 
}}}}}
choose any $P\in \cP$\;
extend $P$ by all vertices of $C_s\setminus \bigcup (\cP\cup \cQ)$\;
{let $P=\merg(P,\cQ)$}\;
\lFor {every path $R\in \cP\setminus P$}
{extend $R$ by $u_n$}\;
report an optimal spanning $p$-stave $\cP$;
}
\caption{Finding an optimal spanning stave.}\label{alg:maxkstave}
\end{algorithm}

Before we prove the correctness of Algorithm~\ref{alg:maxkstave}, we develop some more auxiliary terminology related to this algorithm. 

We say that a vertex $v$ has been {\em added to a path\/} if, at some point in the execution of Algorithm~\ref{alg:maxkstave}, some path $R\in\cP$ such that $v\notin V(R)$ has been extended to a longer path containing $v$ (and possibly some other new vertices). If $u_i$ has been processed by the algorithm and added to a path at lines 8 or 
11 of Algorithm~\ref{alg:maxkstave}, we 
say that $u_i$ has been \emph{activated} at time $a_i$, and we assign $a_i$ the current value of the variable $t$. Thus, we think of time steps $t=1,\ldots,t=s$ during the execution of the algorithm. 
When 
at the same or a later stage a vertex $u_j$ has been added as a successor of $u_i$ to a path,
we say that $u_i$ has been \emph{deactivated} at time $d_i$, and assign $d_i=a_j$. 
Hence, as soon as $a_i$ and $d_i$ have assigned values, we have $\ell_i\le a_i\le d_i\le r_i$. Furthermore, any of the implied inequalities holds whenever both of its sides are defined.
Note that any of these inequalities may be an equality; in particular, a vertex can be activated and deactivated at the same time.

If the involved parameters have assigned values, we consider
the open (time) intervals $(\ell_i,a_i)$, $(a_i,d_i)$ and $(d_i,r_i)$, and we
say that $u_i$ is \emph{free} during $(\ell_i,a_i)$ if this interval is nonempty, \emph{active} during $(a_i,d_i)$ if this interval is nonempty, and \emph{depleted} during $(d_i,r_i)$ if this interval is nonempty.
In particular, note that the vertices that are added to a path at line 8 (if any) are from
$C_t\setminus C_{t+1}$, so they satisfy $r_i=t$ and $a_i=t$. Such vertices will not be
active or depleted during any (nonempty) time interval, but they are free during the time interval $(\ell_i,r_i)$ if this interval is nonempty.

For $1\leq j \leq k \leq s$, we define
\begin{equation*}
  C_{j,k} = (\bigcup_{i=j}^kC_i).
\end{equation*}

The following lemma is crucial.

\begin{lemma}\label{lem:induction}
Suppose that Algorithm~\ref{alg:maxkstave} terminates at line 
16 or finishes an iteration of the loop at lines 6--20. Let the current value of the variable $t$ be also denoted by $t$. If there is at least one depleted vertex during the interval $(t,t+1)$, then there exists an integer $t'<t$ with the following properties (see Fig.~\ref{fig:induction}a for an illustration):
\begin{enumerate}[\quad(i)]
\item $C_{t'+1,t}\setminus (C_{t'} \cup C_{t+1})\neq\emptyset$,
\item a unique vertex $u_i\in C_{t'}\cap C_{t+1}$ is active during $(t',t'+1)$ and is depleted during $(t,t+1)$, 
\item all vertices that are active during $(t,t+1)$ are also active during $(t',t'+1)$, with the only possible exception of 
the last descendant of $u_i$ (which we denote by $v$) 
that can be free during $(t',t'+1)$, 
\item all vertices that are depleted during $(t,t+1)$ and distinct from $u_i$ are also depleted during $(t',t'+1)$,
\item all vertices that are active during $(t',t'+1)$ are also active during $(t,t+1)$, with the only exception of $u_i$, and 
\item all vertices that are free during $(t',t'+1)$ are also free during $(t,t+1)$, with the only possible exception of $v$ if it is active during $(t,t+1)$.
\end{enumerate}
\end{lemma}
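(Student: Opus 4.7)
The plan is to choose $u_i$ to be a depleted vertex at time $(t, t+1)$ whose deactivation time $d_i$ is as large as possible, and to set $t' := d_i - 1$. The conditions $a_i \le d_i \le t$ and $r_i \ge t+1$ that follow from $u_i$ being depleted at $(t, t+1)$ immediately yield $t' < t$ together with the two membership requirements of (ii) ($u_i \in C_{t'} \cap C_{t+1}$, active at $(t', t'+1)$, and depleted at $(t, t+1)$).

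The uniqueness in (ii) will rest on a single structural observation about Algorithm~\ref{alg:maxkstave}: at any step $s$, line 11 only deactivates terminals $y$ with $r_y = s$ (so $y \notin C_{s+1}$), whereas line 8 extends a single path (the one chosen at line 7). Hence at most one vertex of $C_{s+1}$ is deactivated at step $s$. Any hypothetical competitor $u_j \in C_{t'} \cap C_{t+1}$ satisfying the requirements of (ii) must have $d_j \ge d_i$ (from being active at $(t', t'+1)$) and $d_j \le d_i$ (by maximality), forcing $d_j = d_i$; the structural observation applied at $s = d_i$ then forces $u_j = u_i$.

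For (iii)--(vi), the plan is to trace the algorithm's actions over the steps $d_i, d_i + 1, \ldots, t$. The deactivation of $u_i \in C_{d_i+1}$ at step $d_i$ forces $u_i$ to be the terminal of the path $P$ chosen at line 7, which is then extended at line 8 by at least one vertex of $C_{d_i} \setminus C_{d_i+1}$, and possibly further at line 11; by maximality of $d_i$, no vertex of $C_{t+1}$ other than $u_i$ can be deactivated in this window. Therefore only the terminal of $P$ changes, ending at the last descendant $v$ of $u_i$, while intermediate vertices added along the way necessarily have right endpoint $\le t$ and so disappear from all three state sets by time $(t, t+1)$. The net change between the two snapshots is precisely $u_i$ migrating from active to depleted together with $v$ possibly migrating from free to active, which yields each of (iii)--(vi) by bookkeeping.

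The main obstacle is condition (i): producing a vertex $x \in C_{t'+1, t} \setminus (C_{t'} \cup C_{t+1})$. The natural candidate is a vertex added to $P$ at step $d_i$ via line 8 with $\ell_x = d_i$, but such a vertex is not obviously guaranteed by the algorithm --- a vertex added at step $d_i$ could have $\ell_x < d_i$, placing it inside $C_{t'}$. I anticipate needing either a more delicate choice of $t'$ (for example, shifting $t'$ downward to $\ell_x - 1$ for a suitable witness $x$ while preserving (ii)--(vi) using the uniqueness argument above), or an induction on $t$ in which the inductive hypothesis applied at a smaller step rules out the problematic configuration. This is the step where the interaction between the greedy rule of line 11 and the structure of the interval model plays the decisive role.
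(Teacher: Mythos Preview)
Your choice of $u_i$ (a depleted vertex with maximal deactivation time) coincides with the paper's, and your observation that $u_i$ must be the terminal of the path chosen at line~7 at step $d_i$ is correct and useful.  You have also correctly anticipated the fix for~(i): the paper does not take $t'=d_i-1$ but rather $t'=\ell_Q-1$, where $Q$ is the set of descendants of $u_i$ (excluding the last one $v$ if it is active during $(t,t+1)$) and $\ell_Q=\min\{\ell_x:x\in Q\}$.  With this choice every vertex of $Q$ witnesses~(i) immediately.

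The gap in your plan is the belief that after shifting $t'$ downward, properties (ii)--(vi) are ``preserved'' by essentially the same bookkeeping you outlined for $t'=d_i-1$.  They are not.  Your tracing argument covers only the steps $d_i,\dots,t$; once $t'=\ell_Q-1<d_i-1$ you must also control the steps $\ell_Q,\dots,d_i-1$, and over that earlier window the claim ``only the terminal of $P$ changes'' is no longer automatic.  Two concrete places where new work is required:
\begin{itemize}
\item For (ii) you now need $a_i\le t'=\ell_Q-1$, which is strictly stronger than $a_i\le d_i-1$.  The paper obtains this from the greedy rule at line~11: if the predecessor of $u_i$ had been deactivated at a time $\ge\ell_Q$, it would have been adjacent to $u_q$ (the vertex of $Q$ realising $\ell_Q$), and since $r_q\le r_Q\le t<r_i$ the algorithm would have chosen $u_q$ rather than $u_i$ as the successor.
\item For (iii) you must rule out that some $u_m\ne v$, active during $(t,t+1)$, was activated at a step in $[\ell_Q,d_i-1]$.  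Your maximality argument only excludes deactivations of vertices in $C_{t+1}$ during $[d_i,t]$; it says nothing about activations on other paths before $d_i$.  The paper handles this with a two-case analysis (depending on whether $u_m$ is already active during $(d_i-1,d_i)$), both cases ultimately reducing to a contradiction with the minimality rule at line~7 via the free vertex $u_q$.  Properties (iv)--(vi) require similar arguments.
\end{itemize}
In short, your outline converges to the paper's construction, but the step you flag as ``bookkeeping'' is where the genuine content lies: the greedy choices at lines~7 and~11 must be invoked repeatedly, with $u_q$ serving as the pivot, to push the status of every relevant vertex back from $(t,t+1)$ all the way to $(t',t'+1)$.
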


\begin{figure}[h]
\begin{center}
\includegraphics{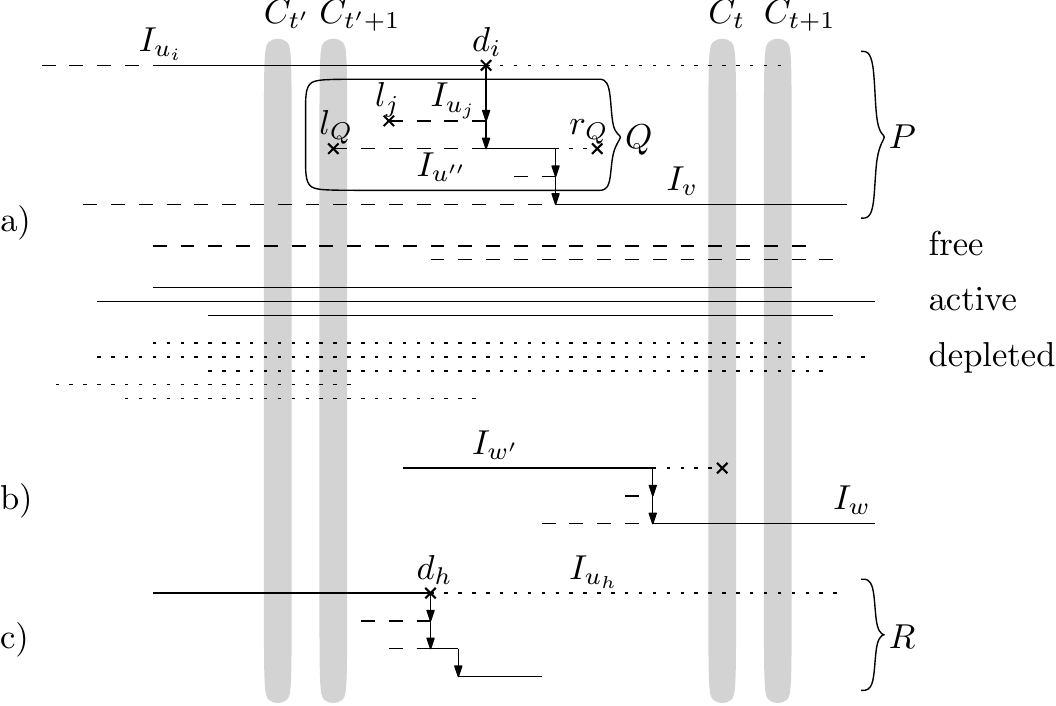}
\end{center}
\caption{A path system as described in Lemma~\ref{lem:induction}. The vertical arrows indicate successors in the paths and the time of activation and deactivation.}\label{fig:induction}
\end{figure}

\begin{proof}
Assume that there is at least one depleted vertex during the interval $(t,t+1)$, and
let $u_i$ be a vertex with the latest deactivation time among those that are depleted during $(t,t+1)$.
To prove that this vertex is unique, we note that all but at most one of the vertices deactivated during a given iteration of the loop on lines 
6--20 (say, at time $t$) have end point equal to $t$ and hence cannot be depleted during a nonempty interval. The only possible exception is the terminal of the path $P$ chosen at line 7 (and only if it is deactivated due to adding a vertex to $P$ at line 8).

We define $Q$ to be the subpath of $P$ formed by all descendants of $u_i$, except that
if the last descendant $v$ of $u_i$ is active during $(t,t+1)$, we do not include $v$ in $Q$.
Observe that the successor of $u_i$ has the same deactivation time as $u_i$, 
hence it is distinct from $v$, and therefore  $Q$ is nonempty.
Let $\ell_Q$ be the smallest start point among intervals corresponding to vertices of $Q$, and let $r_Q$ be the largest such end point. 

If $P$ has a vertex that is active during $(t,t+1)$, this vertex is $v$ and it is not a vertex of $Q$.
Thus all vertices of $Q$ are either depleted during $(t,t+1)$ or their end point is less than or equal to $t$. 
By the choice of $u_i$, none of them belongs to $C_{t+1}$, and hence $r_Q\le t$.
We choose $t'=\ell_Q-1$. Notice that for $u_j\in V(Q)$, $r_j \geq d_i$. Thus if we let $u_q$ be the vertex of $u$ such that $\ell_q = \ell_Q$, then $u_q$ is free during $(t'+1,d_i)$.

Clearly, all vertices of $Q$ are in $C_{t'+1,t}\setminus (C_{t'} \cup C_{t+1})$. Hence, this set is not empty and property (i) is proved.

We prove (ii). Since the deactivation of $u_i$ happened when its successor $u_j$ was free, we have
$d_i\ge \ell_j > t'$. Hence, $u_i$ cannot be depleted during $(t',t'+1)$.
Clearly, $u_i\neq u_1$, as $u_1$ is not depleted during $(t-1,t)$. Therefore,
$u_i$ has a predecessor. Denote it by $u'$.
If $u'$ were adjacent to the vertex $u_q$ of $Q$, then the algorithm would choose $u_q$ as the successor of $u'$, since $r_i > r_Q \geq r_q$. Consequently, the start point of $u'$ is less than or equal to $t'$, so $u_i$ is active during $(t',t'+1)$. The uniqueness of $u_i$ will follow easily once we establish property (iv).

To show property (iii), assume that $u_m$ is a vertex different from $v$ that is active during $(t,t+1)$
but has been activated after $t'$. 
Since $u_1$ is not active during $(t,t+1)$, $u_m\neq u_1$ and $u_m$ has a predecessor $u'$.
We first suppose that $u_m$ is active during $(d_i-1,d_i)$. 
The vertex $u'$ is deactivated at some time $t''$ such that $t'+1\leq t''\leq d_i-1$. Hence, it is adjacent to the previously defined vertex $u_q$ of $Q$ that is free during $(t'+1,d_i)$. Since $r_q \leq r_Q < t+1 \leq r_m$, the successor of $u'$ should be $u_q$ rather than $u_m$, a contradiction.   

It follows that $u_m$ is not active during $(d_i-1,d_i)$. 
The vertex $u_m$ is included in some path $R\in\cP$, $R\neq P$. This path contains a vertex $w'$ that is active 
during $(d_i-1,d_i)$ (see Fig.~\ref{fig:induction}b), where $u_m$ is a descendant of $w'$.
Observe that $w'$ is not active during $(t,t+1)$ because $u_m$ is.
Suppose that the end point of $w'$ is at least $t+1$.  Then $w'$ is depleted during $(t,t+1)$, so by the choice of $u_i$, $w'$ is deactivated before time $d_i$ and cannot be active during $(d_i-1,d_i)$, a contradiction.

Thus, the end point of $w'$ is not larger than $t$.  But then $w'$
should have been chosen at line 7 of the algorithm instead of $u_i$.

For (iv), assume that some $u_h\ne u_i$ is depleted during $(t,t+1)$, but $d_h\geq t'+1$. 
By the choice of $u_i$, we have $d_h<d_i$. 
Without loss of generality, assume that $u_h$ was chosen such that $d_h$ is maximal.
Let $R$ be the path in $\cP\cup\cQ$ containing $u_h$. Note that $R\neq
P$. If $R$ contains a vertex $w$ that is active during $(t,t+1)$, then by (iii),
$w$ is active during $(t',t'+1)$ and we conclude that $u_h$ cannot be included in $R$; a contradiction.

It follows that no vertex of $R$ is active during $(t,t+1)$ (see Fig.~\ref{fig:induction}c).
Moreover, by the choice of $u_h$, the end points of all its descendants are less than or equal to $t$, because if there is a descendant $u_j$ of $u_h$ with $r_j \geq t+1$, then $w$ is depleted during $(t,t+1)$ and $d_j > d_h$, a contradiction.   
Recall that the vertex $u_q$ is free during $(t'+1,d_i)$. Since the path $R$ cannot be terminated while a free vertex is available, it must contain a vertex that is active during $(d_i-1,d_i)$. However, this vertex has a smaller end point than $u_i$, contradicting the correct execution of the algorithm at line 7.

To obtain (v), assume that $w\neq u_i$ is active during $(t',t'+1)$ but not active during $(t,t+1)$. The vertex $w$ is included in some path $R\in\cP\cup\cQ$, $R\neq P$. 
If  one of the descendants of $w$ is active during $(t,t+1)$, then by (iii), this vertex is active during $(t',t'+1)$ contradicting the activeness of $w$ at the same time.
Similarly, if $w$ or one of its descendants is depleted during $(t,t+1)$, then by (iv), this vertex is depleted during $(t',t'+1)$ and $w$ cannot be active.
It follows that the end points of $w$ and its descendants are less than or equal to $t$. If $d_i=t'+1$, then $R$ has a vertex that is active during  $(d_i-1,d_i)$.
If $d_i>t'+1$, then we use the observation that the vertex $u_q$ is free during $(t'+1,d_i)$, and again conclude that
 $R$ has an active vertex during  $(d_i-1,d_i)$. Then this vertex should be selected by the algorithm in line 7 instead of $u_i$; a contradiction.

It remains to prove (vi). Let $w$ be a vertex that is free during $(t',t'+1)$ and not free during $(t,t+1)$. Moreover, we assume that $w\neq v$ if $v$ is active during $(t,t+1)$.
Our algorithm does not terminate until time $t$. Therefore, $w$ is included in some path $R\in\cP\cup\cQ$, $R\neq P$. This path has  a vertex that is active during $(t',t'+1)$. 
By (v), this vertex remains active until $t+1$, but it means that $w$ is not included in $R$. \qed
\end{proof}

Now we are ready to state and prove the main structural result.

\begin{theorem}\label{thm:syst}
An interval graph $G$ contains a spanning $p$-stave between $u_1$ and $u_n$ 
if and only if $\scat(G)\le 2-p$.
\end{theorem}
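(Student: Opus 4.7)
The plan is to establish both directions of the equivalence, working throughout with the clique-path representation of $G$ having maximal cliques $C_1, \ldots, C_s$ and with $u_1, u_n$ whose intervals are the singletons $\{1\}, \{s\}$.

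For the forward direction, assume $G$ has a spanning $p$-stave $\{P_1, \ldots, P_p\}$ between $u_1$ and $u_n$, and let $S$ be any vertex cut of $G$ with $c(G-S) \ge 2$. I would first reduce to the case $u_1, u_n \notin S$ via an exchange argument: if $u_1 \in S$, set $S' := S \setminus \{u_1\}$; since $N(u_1) \subseteq C_1 \setminus \{u_1\}$ lies inside the first chunk of the clique path determined by $S$, the vertex $u_1$ either attaches to a single existing component of $G-S$ or becomes isolated in $G-S'$, giving $c(G-S') \ge c(G-S)$ and $|S'| = |S| - 1$, so $c(G-S') - |S'| > c(G-S) - |S|$. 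Thus the maximum of $c(G-S) - |S|$ is not attained when $u_1 \in S$; analogously for $u_n$. Granted $u_1, u_n \notin S$, let $\alpha_j$ be the number of connected components of $P_j - S$ and $s_j := |V(P_j) \cap S|$. Internal disjointness together with $u_1, u_n \notin S$ imply $\sum_j s_j = |S|$, and $\alpha_j \le s_j + 1$ yields $\sum_j \alpha_j \le |S| + p$. For the matching lower bound, $c(G-S) \ge 2$ forces at least one clique-path gap, hence $u_1$ and $u_n$ lie in distinct components $K_1 \ne K_n$ of $G-S$. Every $P_j$ contributes its initial subpath to $K_1$ and its final subpath to $K_n$, while each of the remaining $c-2$ components is nonempty and hence intersects some $P_j$, contributing at least one further subpath overall. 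Thus $\sum_j \alpha_j \ge 2p + (c-2)$, and combining gives $|S| \ge c + p - 2$, i.e., $c(G-S) - |S| \le 2-p$.

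For the reverse direction, the cases $p \le 2$ follow directly from the results of Hung and Chang: $\scat(G) \le 1$ gives a Hamilton path, which Lemma~\ref{lem:eat} redirects between $u_1$ and $u_n$ (a spanning 1-stave), while $\scat(G) \le 0$ yields a Hamilton cycle, hence a spanning 2-stave. For $p \ge 3$ I would invoke Algorithm~\ref{alg:maxkstave}, which outputs an optimal spanning $q$-stave for some $q$; the forward direction applied to this stave already gives $\scat(G) \le 2-q$. It then suffices to exhibit a vertex cut $S$ certifying $c(G-S) - |S| \ge 2-q$, yielding $\scat(G) = 2-q$ exactly; since a spanning $q$-stave reduces to a spanning $p$-stave for any $p \le q$ by merging two paths via Lemma~\ref{lem:eat}, the equivalence $\scat(G) \le 2-p \iff p \le q$ follows.

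The hardest step is exhibiting the scattering set in the reverse direction. The idea is to examine the algorithm's state at the instants when a path is transferred from $\cP$ to $\cQ$ (each such transfer decreases the running variable $p$), and to apply Lemma~\ref{lem:induction} iteratively to trace back to earlier time steps $t' < t$ where each drop corresponds to a gap-like position in the clique path. The active, depleted, and free vertices assembled from these time steps should form a separator whose deficiency $|S| - c(G-S)$ matches $q - 2$, closing the inequality in the direction that pins down $\scat(G) = 2-q$.
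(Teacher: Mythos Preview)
Your proposal is correct and takes essentially the same approach as the paper: the forward direction is a direct counting argument (your $\sum_j\alpha_j$ bound is equivalent to the paper's ``pick one cut vertex per path, then remove the rest one by one''), and the reverse direction in both cases constructs the certifying cut by iterating Lemma~\ref{lem:induction} back from the time step at which the algorithm's path count first drops below the target. Two minor remarks: your case split $p\le2$ versus $p\ge3$ is unnecessary since the paper's contrapositive argument is uniform in $p$, and the phrase ``outputs an optimal spanning $q$-stave'' slightly anticipates what the scattering-set construction is about to establish---but neither point affects correctness.
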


\begin{proof}
Let us first assume that $\cP=(R_1\dots,R_p)$ is a spanning $p$-stave between $u_1$ and $u_n$. If $G$ is complete, then the claim is trivial. Otherwise, let $S\subset V(G)$ be a scattering set. We claim that $u_1,u_n\notin S$. Suppose the contrary. Since the neighborhood of $u_1$ induces a clique, $c(G-S) \leq c(G-(S-\{u_1\}))$ and therefore 
\begin{equation*}
c(G-S)-|S| < c(G-(S-\{u_1\})) - |S-\{u_1\}|,
\end{equation*}
a contradiction with the choice of $S$. The argument for $u_n$ is symmetric.

The internal vertices of each path in $\cP$ dominate $G$. Hence, 
the vertex cut $S$ contains an internal vertex from each path of $\cP$. From each path $R_i$ of $\cP$, we choose a vertex $s_i \in S$ and set $S' = \{s_1,\dots,s_p\}$. 

Consider the spanning subgraph $G'$ of $G$ induced by the edges of $\cP$.
Observe that $G'-S'$ has two components. If we remove the remaining vertices of
$S\setminus S'$ one by one, then with each vertex we remove, 
the number of components of the remaining graph can increase by at most one as $u_1,u_n\notin S$.
Hence $c(G-S)\le c(G'-S)\le 2+|S|-p$ and 
$\scat(G)\le 2-p$, proving
the forward implication of the statement.
 
For the other direction, let us assume that $G$ does not have a spanning $p$-stave between $u_1$ and $u_n$. During the execution of Algorithm~\ref{alg:maxkstave}, at some stage the value set at line 14 becomes smaller than $p$. 
Suppose $t_1$ is the value of the variable $t$ at this moment.
We will complete the proof by constructing a scattering set $S$ and showing that for this set $c(G-S)-|S|> 2-p$.

We repeatedly use Lemma~\ref{lem:induction} and find a finite sequence 
$t_1,t_2,\dots,t_k$, such that $t_{i+1}=(t_i)'$ as long as there are depleted vertices during $(t_i,t_i+1)$ for $i<k$.
Notice that there are no depleted vertices during $(1,2)$, i.e., this process stops and we have no depleted vertices during $(t_k,t_k+1)$.
We choose $S=\bigcup_{i=1}^k (C_{t_i}\cap C_{t_{i+1}})$ and prove  
that $G-S$ has at least $|S|-p+3$ components.

The subgraphs $G[C_{1,t_k}]-S$ and $G[C_{t_1+1,s}]-S$ contain $u_1$ and $u_n$, respectively; in particular, they have at least one component each. By property (i) in Lemma~\ref{lem:induction}, $G[C_{t_{i+1}+1,t_i}]-S$ has at least one component for each $i\in\{1,\ldots,k-1\}$. 
Since all these components are distinct components of $G-S$, the graph $G-S$ has at least $k+1$ components.

By properties (ii), (v) and (vi) in Lemma~\ref{lem:induction}, $(C_{t_{i+1}}\cap C_{t_{i+1}+1})\setminus (C_{t_{i}}\cap C_{t_{i}+1}) $ contains only vertices that are depleted during $(t_{i+1},t_{i+1}+1)$ for each $i\in\{1,\ldots,k-1\}$. Further, $C_{t_1}\cap C_{t_1+1}$  has no vertices that are free during $(t,t+1)$, because at least one path is not extendable at time $t_1$. Also 
this set has at most  $p-1$ vertices that are active during $(t,t+1)$. Hence, the remaining vertices are depleted.
By properties (ii) and (iv) in Lemma~\ref{lem:induction}, for each $i\in\{1,\ldots,k-1\}$, exactly one vertex that is depleted during $(t_i,t_{i+1})$ has a different status during $(t_{i+1},t_{i+1}+1)$ and is active.
It follows that 
\begin{equation*}
  |S| \leq (p-1) + (k-1) = k+p-2
\end{equation*}
as required.\qed
\end{proof}

Recall that 
the scattering number can be determined in $O(m+n)$ time by an algorithm of Hung and Chang~\cite{HC11}
if the scattering number is positive. 
Then, by analyzing Algorithm~\ref{alg:maxkstave}, we get the following result:

\begin{corollary}\label{c-scat}
The scattering number as well as a scattering set of an interval graph can be computed in $O(m+n)$ time.
\end{corollary}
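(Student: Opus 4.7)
The plan is to combine Algorithm~\ref{alg:maxkstave}, Theorem~\ref{thm:syst}, and the Hung--Chang algorithm into a single linear-time procedure. First, if a clique path of $G$ is not already given, I apply the Booth--Lueker algorithm in $O(m+n)$ time to produce one, together with the distinguished vertices $u_1$ and $u_n$. Then I run Algorithm~\ref{alg:maxkstave}. If it reports that no spanning $1$-stave between $u_1$ and $u_n$ exists, then by Lemma~\ref{lem:eat} the graph $G$ contains no Hamilton path and therefore no Hamilton cycle, so $\scat(G)\geq 1$ by~\cite{HC11} and the Hung--Chang algorithm delivers both the scattering number and a scattering set in $O(m+n)$ time. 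Otherwise Algorithm~\ref{alg:maxkstave} returns an optimal spanning $q$-stave, and Theorem~\ref{thm:syst} applied once to $q$ and once to $q+1$ yields $\scat(G)=2-q$.

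To justify the overall bound I first have to verify that Algorithm~\ref{alg:maxkstave} itself runs in $O(m+n)$ time. The outer loop has at most $s\leq n$ iterations, so it is enough to amortise the extremal-selection steps on lines~7 and~11. Since all start and end points are integers in $[1,s]$, I would maintain the current terminals in buckets indexed by end point, and store for each time $t$ a list of the still-unused vertices of $C_t$ ordered by end point; both structures can be precomputed from the interval model in $O(m+n)$ time. Each vertex is activated, deactivated and at most once moved from $\cP$ to $\cQ$, and each such operation becomes a constant-time update, so the whole run costs $O(m+n)$.

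It remains to exhibit a scattering set when $G$ admits a spanning stave. I would instrument the algorithm to record the first value $t_1$ of the outer loop variable at which the counter drops below $q+1$, together with the path structure present at that moment. Following the proof of Theorem~\ref{thm:syst}, I then iteratively apply Lemma~\ref{lem:induction} to produce a strictly decreasing sequence $t_1>t_2>\cdots>t_k$ in which each $t_{i+1}$ is determined from the smallest start point $\ell_Q$ among the descendants of one specific vertex in the recorded path~$P$. The sequence can therefore be generated by a single right-to-left sweep over the stored path system in $O(n)$ time, and the set $S$ described in the proof of Theorem~\ref{thm:syst} then satisfies $c(G-S)-|S|\geq 2-q=\scat(G)$, so $S$ is a scattering set. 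Assembling $S$ from the sequence costs another $O(m+n)$.

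The main obstacle I foresee is the low-level bookkeeping for the linear-time simulation, in particular ensuring that the repeated ``extend $R$ by the unused vertex of $C_t\cap C_{t+1}$ with smallest end point'' operation on line~11 can be charged entirely to the newly added vertex, without rescanning vertices already placed in $\bigcup(\cP\cup\cQ)$. This is essentially the bookkeeping used by Damaschke~\cite{Da93}, but one still has to check that the monotone sweep combined with bucketed end points gives linear total time in the stave-extending variant, and that the iterated applications of Lemma~\ref{lem:induction} can all be carried out in a single pass over the path system saved at time $t_1$.
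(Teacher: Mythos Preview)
Your proposal is correct and follows essentially the same approach as the paper: branch on whether Algorithm~\ref{alg:maxkstave} produces a spanning stave, invoke Hung--Chang if not, and otherwise read off $\scat(G)=2-q$ and extract a scattering set via the construction in the proof of Theorem~\ref{thm:syst}. The paper's own justification is in fact terser than yours; the one point it singles out that you do not mention is the $\merg$ call at line~21, whose linearity it attributes to Damaschke's proof of Lemma~\ref{lem:eat}.
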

The only operation whose time complexity has not been discussed is $\merg(P,\cQ)$ at line 21. 
We refer to Damaschke's proof of Lemma~\ref{lem:eat} to verify that this can be implemented in $O(m+n)$ time.

Our proof of Theorem~\ref{thm:syst} provides a construction 
of a scattering set that can be straightforwardly implemented in linear time.

\section{Hamilton-connectivity}\label{s-char}

In this section we prove our contribution to  Theorem~\ref{t-char}, which is the following.

\begin{theorem}\label{t-ham}
For all $k\geq 0$, an interval graph $G$ is $k$-Hamilton-connected if and only if  $\scat(G)\leq -(k+1)$.
\end{theorem}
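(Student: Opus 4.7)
The plan is to prove the two directions separately. The backward direction reduces to the base case $k=0$ via a standard monotonicity of the scattering number, and that base case is handled using the spanning-stave characterization of Theorem~\ref{thm:syst}.

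For the forward direction, I assume $G$ is $k$-Hamilton-connected and let $S\subset V(G)$ satisfy $c(G-S)\ge 2$. First I rule out $|S|\le k+1$: if $|S|\le k$ then $G-S$ is itself Hamilton-connected and hence connected, a contradiction; if $|S|=k+1$, then choosing any $v\in S$ and setting $T=S\setminus\{v\}$, the graph $G-T$ is Hamilton-connected and, in the nontrivial cases where it has at least four vertices, $3$-connected, so $(G-T)-v=G-S$ remains connected, again contradicting $c(G-S)\ge 2$. Hence $|S|\ge k+2$, and I choose $T\subset S$ with $|T|=k$ together with two distinct vertices $s_1,s_2\in S\setminus T$. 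By $k$-Hamilton-connectivity, $G-T$ admits a Hamilton path $P$ from $s_1$ to $s_2$, and deleting the $|S|-k$ vertices of $S\setminus T$ from $P$ (including the endpoints $s_1,s_2$) fragments $P$ into at most $|S|-k-1$ nonempty subpaths, each of which lies entirely in one component of $G-S$. Thus $c(G-S)\le|S|-k-1$, so $\scat(G)\le-(k+1)$.

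For the backward direction, I would rely on the elementary inequality $\scat(G-S)\le\scat(G)+|S|$: any $T\subseteq V(G-S)$ with $c((G-S)-T)\ge 2$ gives, via the disjoint union $S\cup T$, a witness for $\scat(G)$ smaller by exactly $|S|$. Since induced subgraphs of interval graphs are interval graphs, whenever $\scat(G)\le-(k+1)$ and $|S|\le k$ the graph $G-S$ is an interval graph with $\scat(G-S)\le-1$, and the base case (proved below) yields that $G-S$ is Hamilton-connected, which establishes the $k$-Hamilton-connectivity of $G$.

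The base case asserts that every interval graph $G$ with $\scat(G)\le-1$ is Hamilton-connected. By Theorem~\ref{thm:syst} applied with $p=3$, $G$ admits a spanning $3$-stave $\cP=(P_1,P_2,P_3)$ between $u_1$ and $u_n$, and I must exhibit a Hamilton path between any two distinct vertices $s,t\in V(G)$. When $\{s,t\}=\{u_1,u_n\}$, iteratively applying the $\merg$ operation (built on Lemma~\ref{lem:eat}) to absorb $P_2$ and then $P_3$ into $P_1$ directly yields a Hamilton path from $u_1$ to $u_n$. When exactly one of $s,t$, say $s=u_1$, lies in $\{u_1,u_n\}$ while $t$ lies internally on some $P_i$, I split $P_i$ at $t$ into a prefix (from $u_1$ to $t$) and a suffix (from $t$ to $u_n$), apply $\merg$ to absorb the suffix and the other two stave paths into a single $u_n$-to-$u_1$ path on the appropriate vertex set, and concatenate it with the prefix reversed to terminate at $t$. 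The main obstacle is the case where neither $s$ nor $t$ lies in $\{u_1,u_n\}$: splitting two of the three paths of $\cP$ at $s$ and at $t$ produces four open fragments alongside the third, undivided path, and Lemma~\ref{lem:eat} does not apply directly since its merging guarantee concerns paths whose endpoints are $u_1$ or $u_n$. My plan is to localize the argument: assuming without loss of generality that $s$ precedes $t$ in the clique-path order, I restrict to the subgraphs of $G$ induced by the cliques strictly before $s$, between $s$ and $t$, and strictly after $t$. Each of these is itself an interval graph inheriting a stave substructure from $\cP$, and by monotonicity of the scattering number it admits its own spanning stave between the appropriate local leftmost and rightmost vertices, to which Lemma~\ref{lem:eat} can be applied. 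Stitching the resulting local paths back together with the subpaths of $\cP$ passing through $s$ and $t$ gives the desired Hamilton path from $s$ to $t$. Verifying that the bound $\scat(G)\le-1$ propagates to all these localized substructures in a way that always supplies the required staves will be the technical crux.
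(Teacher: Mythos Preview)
Your forward direction is correct and arguably cleaner than the paper's: you delete $k$ vertices of the cut and use a Hamilton path between two remaining cut vertices directly, whereas the paper deletes $k+2$ vertices, invokes traceability of the remainder, and then computes with $\scat(G-T)\le 1$. Your reduction of the backward direction from general $k$ to $k=0$ via $\scat(G-S)\le\scat(G)+|S|$ is identical to the paper's.

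The gap is in the base case $k=0$, in the subcase where neither $s$ nor $t$ lies in $\{u_1,u_n\}$. Your localization plan---restricting to the cliques ``before $s$'', ``between $s$ and $t$'', and ``after $t$'', and asserting that each piece inherits a usable spanning stave because the scattering-number bound propagates---is not justified and is unlikely to go through as stated. The inequality $\scat(H)\le\scat(G)+|V(G)\setminus V(H)|$ that you used for the reduction step does not help here, since the complements of these local pieces can be arbitrarily large; and there is no general monotonicity $\scat(G[X])\le\scat(G)$ for induced subgraphs of interval graphs. You would also have to make precise what ``before $s$'' means when the interval of $s$ spans several maximal cliques, and to verify that the stave paths, when restricted to a region, still begin and end at the correct local extreme vertices. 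None of this is done, and you yourself flag it as ``the technical crux''.

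The paper sidesteps all of this with a much shorter device. It splits the ``both internal'' case further according to whether $s$ and $t$ lie on different stave paths or on the same one. If they lie on different paths $Q$ and $R$, one cuts $Q$ after $s$ and $R$ before $t$ and assembles the four pieces together with the untouched path $P$ via a single $\merg$ and two concatenations. If they lie on the same path $Q$, one cuts $Q$ into $Q_1,Q_2,Q_3$ at $s$ and $t$; the key observation---which your proposal is missing---is that each path of the $3$-stave \emph{dominates} $G$, so the first vertex of the middle piece $Q_2$ has a neighbour $z$ on $R$. Splicing $Q_2$ onto the initial segment of $R$ through $z$ produces a new path starting at $u_1$, and one more $\merg$ call finishes the construction. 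No localization, no auxiliary scattering-number bounds.
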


\begin{proof}
Let  $k\geq 0$ and $G$ be an interval graph
with leftmost and rightmost vertices $u_1$ and $u_n$ as defined before. The statement of Theorem~\ref{t-ham} is readily seen to hold when $G$ is a complete graph.
Hence we may assume without loss of generality that $G$ is not complete.

First suppose that $G$ is $k$-Hamilton-connected. Then $G$ has at least $k+3$ vertices.
We claim that $G-R$ is traceable for every subset $R\subset V(G)$ with $|R|\leq k+2$.
In order to see this, suppose that $R\subseteq V(G)$ with $|R|\leq k+2$.
We may assume without loss of generality that $|R|=k+2$.
Let $s$ and $t$ be two vertices of $R$. By definition, $G^*=G-(R\setminus \{s,t\})$ has a Hamilton path with end-vertices $s$ and $t$. 
Hence $G-R=G^*-\{s,t\}$ is traceable. Below we apply this claim twice.

Because $G$ is not complete, $G$ has a scattering set $S$. By definition, $S$ is a vertex cut. Hence $S=\{s_1,\ldots,s_\ell\}$ for some $\ell\geq k+3$, as otherwise
$G-S$ would be traceable, and thus connected, due to our claim. 
Let $T=\{s_1,\ldots,s_{k+2}\}$ and let $U=\{s_{k+3},\ldots,s_\ell\}$.
By our claim, $G'=G-T$ is traceable implying that $\scat(G')\leq 1$~\cite{SCH92}.
Because $c(G'-U)=c(G-S)\geq 2$, we find that $U$ is a vertex cut of $G'$. We use these two facts to derive that
\[
\begin{array}{lcl}
1 &\geq &\scat(G')\\[3pt]
   &\geq &c(G'-U)-|U|\\[3pt]
   &=&c(G-T-U)-|T|-|U|+|T|\\[3pt]
   &= &c(G-S)-|S|+|T|\\[3pt]
   &= &\scat(G)+|T|\\[3pt]
   &=&\scat(G)+k+2,
\end{array}
\]
implying that $\scat(G)\leq 1-(k+2)=-(k+1)$, as required.

\begin{figure}
\begin{center}
\includegraphics{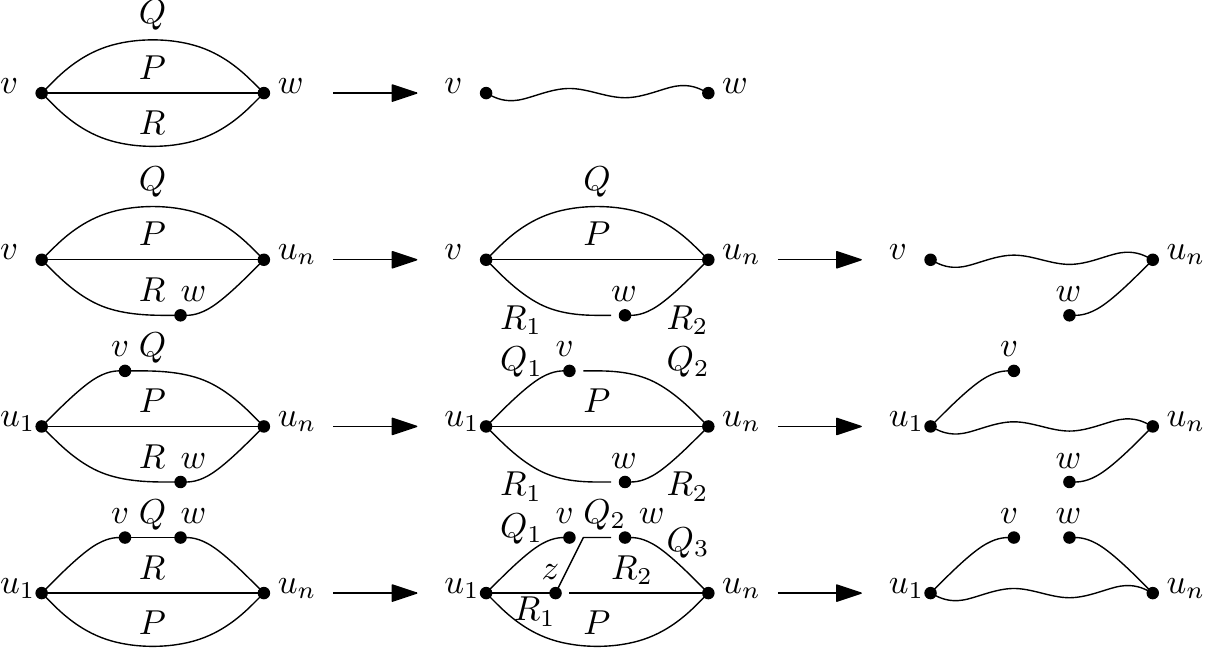}
\end{center}
\caption{The essential cases in the proof of Theorem~\ref{t-ham} for $k=0$.}\label{fig:3hamconn}
\end{figure}

Now suppose that $\scat(G)\leq -(k+1)$. 
First let $k=0$.
By Theorem~\ref{thm:syst}, there exists a spanning 3-stave $\cP=(P,Q,R)$ between $u_1$ and $u_n$. 
Let $v,w$ be an arbitrary pair of vertices  of $G$. We distinguish four cases in order to find a Hamilton path between $v$ and $w$; see Fig.~\ref{fig:3hamconn} for an illustration.

\pcase{Case 1:} $v=u_1$ and $w=u_n$. In this case, $\merg(P,Q,R)$ is the desired Hamilton path.

\pcase{Case 2:} $v=u_1$ and $w\ne u_n$. Assume without loss of generality that $w\in R$.
We split $R$ before $w$ into the subpaths $R_1$ and $R_2$, i.e., $w$ becomes the first vertex of $R_2$ 
and it does not belong to $R_1$.
Then $\merg(P,Q,R_1)\circ R_2^{-1}$ is the desired path. The case with $v\ne u_1$ and $w=u_n$ is symmetric.

\pcase{Case 3:} $v\ne u_1$ and $w\ne u_n$ belong to different paths, say $v\in Q$ and $w\in R$. We split 
$Q$ after $v$ into $Q_1$ and $Q_2$, and we also split $R$ before $w$, as above.
Then $Q_1^{-1}\circ \merg(P,Q_2,R_1)\circ R_2^{-1}$ is the desired path.

\pcase{Case 4:} $v\ne u_1$ and $w\ne u_n$ belong to the same path, say $Q$. Without loss of generality, assume that both $v\ne u_1$ and $w\ne u_n$ appear in this order on $Q$. 
We split $Q$ after $v$ and before $w$ 
into three subpaths $Q_1,Q_2,Q_3$. If $v$ and $w$ are consecutive on $Q$, i.e., when $Q_2$ is empty, then 
$Q_1^{-1}\circ \merg(P,R)\circ Q_3^{-1}$ is the desired path.
Otherwise, let $z$ be any vertex on $R$ that is a neighbor of the first vertex of $Q_2$. Such $z$ exists since
the path $R$ dominates $G$. 
We split $R$ after $z$ into $R_1$ and $R_2$. By the choice of $z$, $R_1$ and $Q_2$ can be combined through $z$ into a valid path $R'$ containing exactly the same vertices as $R_1$ and $Q_2$ and starting at $u_1$.   
Then we choose $Q_1^{-1}\circ \merg(P,R',R_2)\circ Q_3^{-1}$.

\medskip
\noindent
Now let $k\geq 1$.
Let $S$ be a set of vertices with $|S|\leq k$. We need to show that $G-S$ is Hamilton-connected. Let $T$ be a scattering set of $G-S$ and let $S^*=S\cup T$.
Because $T$ is a scattering set of  $G-S$, we find that $S^*$ is a vertex cut of $G$. We use this to derive that
\[
\begin{array}{lcl}
\scat(G-S)&=&c(G-S-T)-|T|\\[3pt]
&= &c(G-S^*)-|S^*|+|S^*|-|T|\\[3pt]
&\leq &\scat(G)+k-0\\[3pt]
&\leq &-1.
\end{array}
\]
Then, by returning to the case $k=0$ with $G-S$ instead of $G$, we find that $G-S$ is Hamilton-connected, as required. This completes the proof of Theorem~\ref{t-ham}.\qed
\end{proof}

\section{Future Work}\label{s-con}

We conclude our paper by posing a number of open problems. We start with recalling two open problems posed in the literature.

First of all, Damaschke's question~\cite{Da93} on the complexity status of 2-{\sc Hamilton Path} is still open. Our results imply 
that we may restrict ourselves to interval graphs with scattering number equal to zero or one. This can be seen as follows. Let $G$ be an interval graph that together with two of its vertices $u$ and $v$ forms an instance
of 2-{\sc Hamilton Path}. We apply Corollary~\ref{c-scat} to compute $\scat(G)$ in $O(m+n)$ time. If $\scat(G)<0$, then $G$ is Hamilton-connected by Theorem~\ref{t-char}. Then, by definition, there exists a Hamilton path between
$u$ and $v$. If $\scat(G)>1$, then $G$ is not hamiltonian, also due to Theorem~\ref{t-char}. Hence, there exists no Hamilton path between $u$ and $v$.

Second, Asdre and Nikolopoulos~\cite{AN10} asked about the complexity status of the $\ell$-{\sc Path Cover} problem on interval graphs.
This problem generalizes $1$-{\sc Path Cover} and is to determine the size of a smallest path cover of a graph $G$ subject to the additional condition that every vertex of a given set $T$ of size $\ell$ is 
an end-vertex of a path in the path cover. The same authors show that both {\sc $\ell$-Path Cover} and $2$-{\sc Hamilton Path} can be solved in $O(m+n)$ time on proper interval graphs~\cite{AN10a}. 

The {\sc Spanning Stave} problem is that of computing the minimum value of $p$ for which a given graph has a spanning $p$-stave.
Because a Hamilton path of a graph is a  spanning $1$-stave and {\sc Hamilton Path} is $\NP$-complete, this problem is $\NP$-hard. What is the computational complexity of
{\sc Spanning Stave} on interval graphs? The following example shows that we cannot generalize Lemma~\ref{lem:eat} and apply Algorithm~\ref{alg:maxkstave} as an attempt to solve this problem.
Take the graph with four vertices $a$, $b$, $c$, $d$ and edges $ab$, $ac$, $bc$, $bd$, $cd$. The resulting graph is interval.  
However, we only have a spanning $2$-stave between $a$ and $d$ (as their degrees are $2$) but there is a spanning $3$-stave between $b$ and $c$, namely
$\{bac,bc,bdc\}$. 

Chen et al.~\cite{CCLLW} define the {\em spanning connectivity\/} of a Hamilton-connected graph $G$ as the largest integer $q$ such that $G$ has a spanning $p$-stave between any two vertices of $G$ for all integers $1\le p \le q$. So, for instance, the complete graph on $n$ vertices has spanning connectivity $n-1$, and a graph has spanning connectivity at least $1$ if and only if it is Hamilton-connected.
By the latter statement, the corresponding optimization problem {\sc Spanning Connectivity} is $\NP$-hard.
What is the computational complexity of {\sc Spanning Connectivity} on interval graphs or even proper interval graphs?

Kratsch, Kloks and M\"uller~\cite{KKM94} gave an $O(n^4)$ time algorithm for solving {\sc Toughness} on interval graphs. Is it possible to improve this bound to linear on interval graphs just as we did for 
{\sc Scattering Number}?

Finally, can we extend our $O(m+n)$ time algorithms for {\sc Hamilton Connectivity} and {\sc Scattering Number} to superclasses of interval graphs such as circular-arc graphs and cocomparability graphs? 
The complexity status of {\sc Hamilton Connectivity} is still open for both graph classes, although {\sc Hamilton Cycle} can be solved in $O(n^2\log n)$ time on circular-arc graphs~\cite{SCH92} and in $O(n^3)$ time on cocomparability graphs~\cite{DS94}. It is known~\cite{KKM94} that {\sc Scattering Number} can be solved in $O(n^4)$ time on circular-arc graphs and in polynomial time on cocomparability graphs of bounded dimension.

\end{document}